\newtheorem{definition}{Definition}[section]
\newtheorem{theorem}{Theorem}
\title{Practical and Secure Federated Recommendation with Personalized Mask}
\author{
Liu Yang$^{1,2}$\and
Junxue Zhang$^{1,2}$\and
Di Chai$^{1,2}$\and
Leye Wang$^{3}$\and
Kun Guo$^{4}$\and
Kai Chen$^{1}$\And
Qiang Yang$^{1}$
\affiliations
$^1$Hong Kong University of Science and Technology\\
$^2$Clustar\\
$^3$Peking University\\
$^4$Fuzhou University\\
\emails
\{lyangau, jzhangcs, dchai, kaichen, qyang\}@cse.ust.hk, leyewang@pku.edu.cn, gukn@fzu.edu.cn
}
\begin{document}

\maketitle

\begin{abstract}
Federated recommendation addresses the data silo and privacy problems altogether for recommender systems. Current federated recommender systems mainly utilize cryptographic or obfuscation methods to protect the original ratings from leakage. However, the former comes with extra communication and computation costs, and the latter damages model accuracy. Neither of them could simultaneously satisfy the real-time feedback and accurate personalization requirements of recommender systems. In this paper, we proposed federated masked matrix factorization (FedMMF) to protect the data privacy in federated recommender systems without sacrificing efficiency and effectiveness. In more details, we introduce the new idea of personalized mask generated only from local data and apply it in FedMMF. On the one hand, personalized mask offers protection for participants' private data without effectiveness loss. On the other hand, combined with the adaptive secure aggregation protocol, personalized mask could further improve efficiency. Theoretically, we provide security analysis for personalized mask. Empirically, we also show the superiority of the designed model on different real-world data sets. 
\end{abstract}

\section{Introduction}

\begin{figure}[t]
    \centering
    \includegraphics[width=1\linewidth]{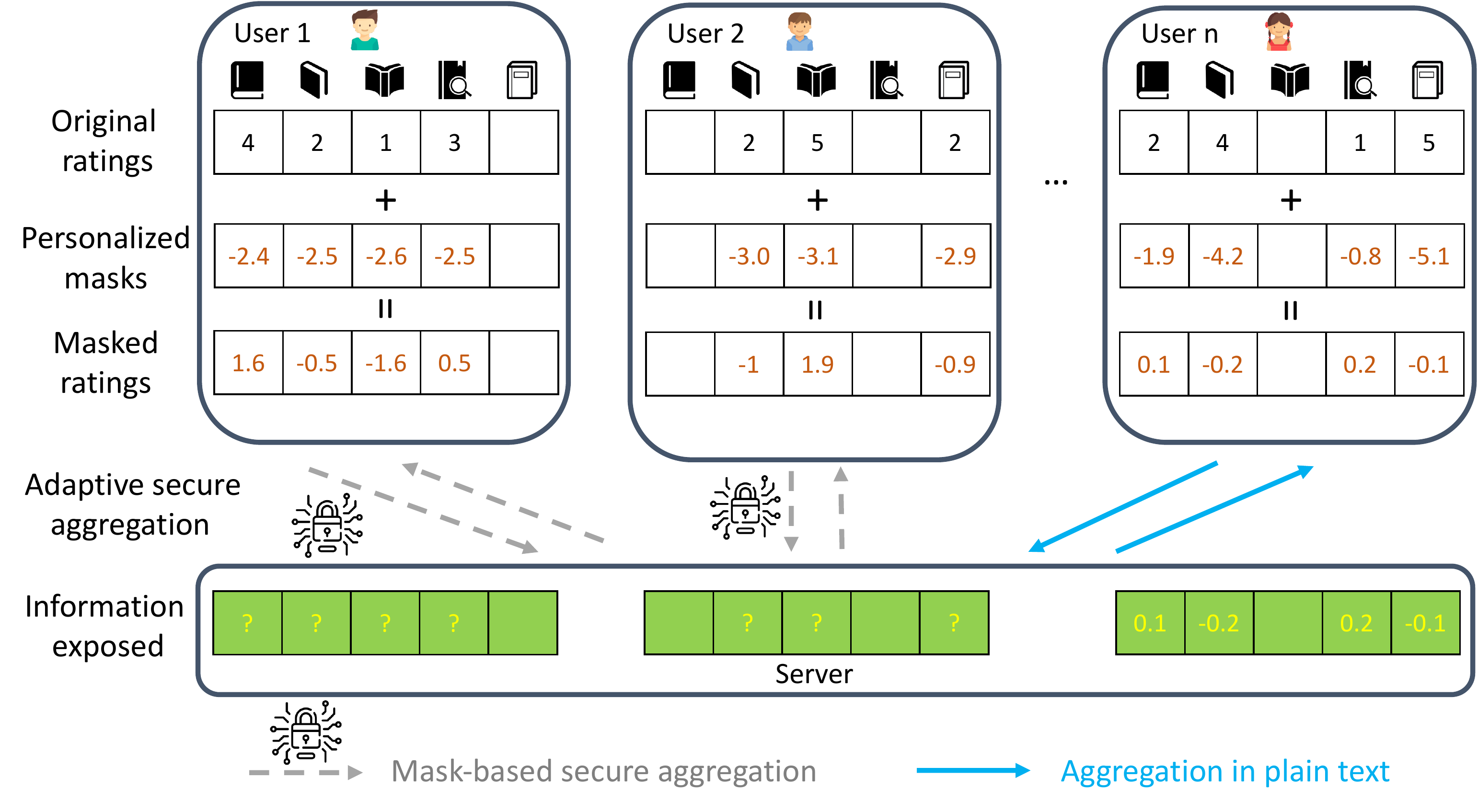}
    \caption{Illustration of the proposed FedMMF method. First, each party generates personalized masks via training a local model. Second, masked ratings are constructed via a combination of original ratings and personalized masks. Then, federated matrix factorization is performed on the masked ratings of all parties. An adaptive secure aggregation method is adopted. The parties with well-protected original data could share model updates via vanilla aggregation in plain-text format. And the other parties carry out a mask-based secure aggregation protocol. Finally, only the masked ratings with limited information are exposed to the server, leaking no data privacy.
    }
    \label{fig:illustration}
\end{figure}

Federated recommender system (FedRec) is an essential application of federated learning in the recommendation scenario~\cite{yang2020federated}. In recent years, federated learning has been a fast-growing research field, which keeps private data locally at multiple parties and trains models collaboratively in a secure and privacy-preserving way~\cite{mcmahan2021advances,mcmahan2017communication,yang2019federated}. For example,~\cite{ammad2019federated} proposed a federated matrix factorization algorithm, which distributes the training process at each local party and aggregates the computed gradients on the central server. Privacy-preserving is one of the major challenges in federated learning. Data decentralization does alleviate privacy risks compared with the conventional data-center training scheme. However, the gradients transmitted among different parties could still leak user privacy~\cite{aono2017privacy,zhu2019deep}. 

To address the privacy problem, current FedRec methods can be broadly divided into two categories. The first-kind solutions are based on cryptographic techniques such as homomorphic encryption (HE)~\cite{gentry2009fully} or secure multi-party computation (SMC)~\cite{yao1982protocols}. For example, HE-based FedRec~\cite{chai2020secure} utilizes HE to protect the transmitted gradients. These methods could lead to lossless model performance. However, they produce extra computation and communication costs since federated learning needs a large amount of calculation and intermediate results exchange. The second-kind solutions utilize the obfuscation methods such as differential privacy (DP)~\cite{dwork2014algorithmic}. For instance, DP-based FedRec~\cite{hua2015differentially} has been designed to provide a recommendation service without leaking the data privacy of multiple sources. Although DP-based federated algorithms are efficient, they damage the accuracy of models. Therefore, the above solutions all have difficulties when applying to practical problems. They cannot satisfy both the two requirements of recommender system (RecSys),~\ie, personalization and real-time. 

In this paper, we propose federated masked matrix factorization (FedMMF) as a novel FedRec method. The designed FedMMF method could protect the data privacy of FedRec without sacrificing efficiency and effectiveness. Shown in Fig.~\ref{fig:illustration}, instead of using cryptographic or obfuscation methods, we introduce a new idea of protecting private data from leakage in FedRec, which is called personalized mask. Personalized mask is a locally generated mask that adds on the original data for preserving privacy without effectiveness loss. Gradients computed on the masked ratings of one participant could be secure enough to directly share with other parties. Moreover, combined with the adaptive secure aggregation protocol, personalized mask also further relieves the efficiency problem of FedRec. Theoretically and empirically, we show the superiority of FedMMF. 

%We first apply the personalized mask algorithm in the recommendation scenario, specifically in the matrix factorization algorithm~\cite{koren2009matrix}, because the rating information is the only privacy to protect in the recommendation scenario, which is relatively simple. 

% we utilize the secret sharing (SS) technique to incorporate the secure aggregation process of federated matrix factorization. Our designed secure aggregation randomly splits the private gradients of each party into many shares, then reconstructs the final aggregated updates using the secret shares from different parties. Compared with HE, SS largely speeds up the whole training process. 

%The personalized masks protect the original rating information. Thus, the server would access no user privacy. 

%The contributions of this paper are three-folded:
%
%\begin{icompact}
%    \item We apply the secret sharing technique in the federated recommendation scenario, which largely speeds up the current federated matrix factorization solutions;
%    \item We propose a new idea of personalized masks and show it could help solve the efficiency and efficacy challenge of FedRec at the same time; 
%    \item We discuss related topics and the extension of personalized masks in general federated learning tasks.
%\end{icompact}

The paper is organized as follows, in Section~\ref{sec:preliminaries}, we first introduce the basic models and the privacy leakage problem; in Section~\ref{sec:fedmmf}, we explain the FedMMF algorithm, the training process, and the privacy guarantee; in Section~\ref{sec:experiment}, we show the performance of FedMMF in three real-world datasets.
% in Section~\ref{sec:related}, we discuss two related topics in federated learning.

\section{Preliminaries}
\label{sec:preliminaries}
In this section, we first introduce the traditional matrix factorization for recommendation. Then, based on the current challenges of RecSys, we explain federated matrix factorization (FedMF). Although FedMF alleviates the privacy problem of FedRec, there still exists leakage in the training process. Finally, we talk about the current solutions of secure FedMF.

\subsection{Matrix Factorization}
Given a rating matrix $\bm{R} \in \mathbb{R}^{n \times m}$, the recommender system aims to fill in the missing values of the matrix. Matrix factorization (MF) is regarded as one of the most classic recommendation algorithm~\cite{koren2009matrix}. It decouples the original matrix $\bm{R}$ into two low-rank matrices. The rating $r_{ui}$ that user $u$ gives to the item $i$ can be approximated as:

\begin{equation}
    \hat{r}_{ui} = \bm{q}^T_i \bm{p}_u,
\end{equation}
where $\bm{q}_i \in \mathbb{R}^{k \times 1}$ represents the latent factors of item $i$, $\bm{p}_u  \in \mathbb{R}^{k \times 1}$ represents the latent factors of user $u$, and the latent dimension $k$ can be regarded as the item's implicit characteristics. We could optimize the latent factors via minimizing the loss given below using the existing ratings:

\begin{equation}
    \mathop{\min}_{\bm{q}_i^*, \bm{p}_u^*} \frac{1}{2} \sum_{(u, i) \in \mathcal{K}} (r_{ui} - \bm{q}^T_i \bm{p}_u)^2 + \lambda (\left\| \bm{q}_i \right\|_2^2 + \left\| \bm{p}_u \right\|_2^2 ),
\end{equation}
where $\mathcal{K}$ stands for the set of user-item pairs whose rating $r_{ui}$ is already known and $\lambda$ is the regularization coefficient. Stochastic gradient descent is utilized to update each parameter:

\begin{equation}
    \bm{q}_i \leftarrow \bm{q}_i - \gamma \cdot (\lambda \cdot \bm{q}_i - e_{ui} \cdot \bm{p}_u),
\end{equation}

\begin{equation}
    \bm{p}_u \leftarrow \bm{p}_u - \gamma \cdot (\lambda \cdot \bm{p}_u - e_{ui} \cdot \bm{q}_i),
\end{equation}
where $e_{ui} = r_{ui} - \bm{q}^T_i \bm{p}_u$ and $\gamma$ is the learning rate. Conventional recommender systems centrally collect users' private data and train MF algorithm on the server, which leads to immense privacy risks.

\subsection{Federated Matrix Factorization}
%\subsubsection{Vanilla FedMF}
With the development of federated learning, federated recommender system (FedRec) was proposed to address the privacy and data silo problems in the recommendation scenarios~\cite{yang2020federated}. In this paper, we focus on the horizontal FedRec, where each party only contains the rating information of one individual user and the user's private data is not allowed to leave the local device. Federated matrix factorization (FedMF) was designed to train recommendation models in such a naturally distributed situation. In the vanilla FedMF algorithm~\cite{ammad2019federated}, all the item latent factors $\{ \bm{q}_i \}_{i \in \mathcal{I}}$ are maintained on the central server, while each user's latent factors $\bm{p}_u$ is kept on the local party. The training process is as follows and loops until the convergence of model parameters: 1) party $u$ downloads item $i$'s latent factors $\bm{q}_i$ from the server; 2) party $u$ updates user's latent factors $\bm{p}_u$ using private local data $\bm{r}_u$; 3) party $u$ computes the gradients of each item's latent factors $\bm{\eta}_{ui} = \lambda \cdot \bm{q}_i - e_{ui} \cdot \bm{p}_u$ with $\bm{r}_u$ and the updated $\bm{p}_u$; 4) party $u$ sends $\bm{\eta}_{ui}$ to server; 5) server aggregates the gradients $\sum_{u \in \mathcal{U}} \bm{\eta}_{ui}$ and updates $\bm{q}_i$.

\subsubsection{Privacy Leakage from Gradients in FedMF}
Vanilla FedMF makes sure that users' private data never leaves the local parties. However, the transmitted gradients could also lead to privacy leakage~\cite{chai2020secure}. From user $u$, the server continuously receives the gradients of the item $i$'s latent vector at step $t-1$ and step $t$:

\begin{equation}
\label{eq:grad_t-1}
    \bm{\eta}_{ui}^{t-1} = \lambda \cdot \bm{q}^{t-1}_i - e_{ui}^{t-1} \cdot \bm{p}_u^{t-1},
\end{equation}

\begin{equation}
\label{eq:grad_t}
    \bm{\eta}_{ui}^t = \lambda \cdot \bm{q}^t_i - e_{ui}^t \cdot \bm{p}_u^t,
\end{equation}
where $e_{ui}^{t-1} = r_{ui} - {\bm{q}^{t-1}_i}^T \bm{p}_u^{t-1}$ and $e_{ui}^t = r_{ui} - {\bm{q}^t_i}^T \bm{p}_u^t$. Besides, the server also knows the update rule of the latent vector of user $u$:

\begin{equation}
\label{eq:diff_p}
    \bm{p}_u^t = \bm{p}_u^{t-1} + \gamma \cdot \sum_{i \in \mathcal{K}_u} (\lambda \cdot \bm{p}_u^{t-1} - e_{ui}^t \cdot \bm{q}^t_i),
\end{equation}
where $\mathcal{K}_u$ stands for the set of items that user $u$ has rated. Obviously, only $\bm{p}_u^{t-1}$, $\bm{p}_u^t$ and $r_{ui}$ are unknown to the server. Combining equations~\ref{eq:grad_t-1}, \ref{eq:grad_t}, and \ref{eq:diff_p}, the server could solve the unknown variables~\cite{lazard2009thirty}. In this way, private raw ratings of each user are revealed. 

\subsubsection{Secure FedMF}

To address the gradient leakage problem of vanilla FedMF, a few secure FedMF algorithms have been proposed. For example, HE-based FedMF~\cite{chai2020secure} and DP-based FedMF~\cite{hua2015differentially}, respectively, utilize HE and DP to further preserve privacy. HE-based FedMF encrypts gradients of item latent factors with HE before transmitting them to the server. Then, the server performs secure aggregation on the encrypted gradients, updates item latent factors in ciphertext state, and distributes the new encrypted item latent factors to each user. In a similar way, DP-based FedMF adds noises to gradients before aggregation. However, the former one causes extra costs and the latter one results in accuracy losses.

\section{Federated Masked Matrix Factorization}
\label{sec:fedmmf}
%Neither of the current HE-based FedMF and DP-based FedMF could meet both the personalization and real-time requirements of recommender systems when protecting data privacy. 
In this section, we explain the proposed FedMMF method. First,  FedMMF adopts a new idea of the personalized mask and we analyze its security. Then, FedMMF applies an adaptive secure aggregation protocol according to different protection situations provided by personalized masks on various users.

% to further alleviate the efficiency and efficacy problem of FedRec.
\subsection{Personalized Mask}
%The additional mask is a common practice of hiding raw data from revealing by other parties in federated learning~\cite{yang2019federated}. It is generated randomly and privately by each party. Like the conventional mask, the main idea of the personalized mask is to cover the original ratings so that the server cannot access users' private data. However, We name the mask as personalized mask because it can also provide a performance improvement.
We generate the personalized masks via private well-trained model separately at each party. As shown in Fig.~\ref{fig:illustration}, FedMMF applies the idea of personalized mask in the previous FedMF architecture. The whole training process is as follows. Firstly, before the federated training of latent factors, each local party $u$ trains a private local model using only the user's own data. The corresponding loss function is shown below:

\begin{equation}
    L_u = \frac{1}{|\mathcal{K}_u|} \sum_{i \in \mathcal{K}_u} (r_{ui} - f_u^{mask}(i))^2
\end{equation}
Without loss of generality, we define the private model of user $u$ as $f_u^{mask}$. Then, the model is used to give prediction $f_u^{mask}(i)$ on each user-item pair $u, i$, where $i \in \mathcal{K}_u$. The opposite of the prediction is regarded as the personalized mask. Finally, all parties collaboratively train a matrix factorization model on the masked rating:

\begin{equation}
\label{eq:masked_rating}
    r_{u, i}^{masked} = r_{u, i} - f_u^{mask}(i).    
\end{equation}

The prediction of FedMMF algorithm for one specific user-item pair $(u, i)$ is:

\begin{equation}
    \hat{r}_{ui} = \bm{q}^T_i \bm{p}_u + f_u^{mask}(i).
\end{equation}
The private model $f_u^{mask}$ could be an arbitrary model which only trains on the local data. The well-behaved private model at each local party could protect the privacy of original ratings. Thus, parties with well-behaved private models are able to directly share their gradients computed on the masked ratings. Theorem~\ref{the:1} provides us with how much privacy could be protected by personalized masks.

%Therefore, side information may also be used, which contains user profiles, item attributes, or context features (\textit{e.g.}, location, time, weather,~\textit{etc.}).

%On the one hand, We hope for a performance improvement of FedMMF compared to vanilla FedMF. The reason is that we actually utilize the idea of ensemble learning~\cite{sagi2018ensemble}. Ensemble learning is commonly used to combine multiple weak learners for a better prediction performance~\cite{chen2016xgboost}. On the other hand,  In such way, the efficiency is further improved.

%For each party, the updates of item's latent factors are directly shared to the central server, if the private model is personalized enough to cover the private information. Other parties join the federated training via the secret sharing secure aggregation method~\ref{fig:secure_agg_ss}.

%Privacy protection is relatively simple in the recommendation scenarios because we only need to care about the privacy of user ratings. However, when extending the personalized mask technique to general federated learning tasks, we should also prevent feature information from revealing. InstaHide offers us a novel privacy-preserving idea with data augmentation~\cite{huang2020instahide}. The authors utilize a simple encryption method to protect the privacy of original image features, while they did not consider the privacy of the label. Combining with the proposed encryption approach, the personalized mask could fulfill the privacy requirement of general federated learning tasks.

\subsubsection{Security Analysis}
The private model $f_u^{mask}$ aims to hide the information of $r_{ui} \in \mathcal{R}$, which is the rating that each user $u \in \mathcal{U}$ gives to item $i \in \mathcal{I}$. For user $u$, the training data of $f_u^{mask}$ is denoted by $\mathcal{Z}^l = \{ (i, r_{ui}) \}_{i \in \{ 1, ..., l \}}$. The training data is sampled from a joint distribution $P_{\mathcal{IR}}$. We assume $\mathcal{R} \in [0, 1]$.

\theoremstyle{definition}
\begin{definition}[Privacy indicator of personalized mask]
We define the private information exposed by one specific user $u$ after applying personalized masks as:
\begin{equation}
J(f_u^{mask}, P_{\mathcal{IR}}) = E_{(\mathcal{I}, \mathcal{R}) \sim P_{\mathcal{IR}}} [\| \mathcal{R} - f_u^{mask}(\mathcal{I}) \|^2].
\end{equation}
\end{definition}
With a smaller value of privacy indicator $J$, personalized mask could provide a better protection. If the local private model predicts more accurately, personalized masks will cover more information of the original ratings.
%This indicator could inform us of many things. It is interpreted in different ways when increasing and decreasing. When indicator $J$ starts to increase from a small number, there are two possible reasons to explain. Firstly, we leak more privacy if the local model is constrained too much. Thus, the model prediction is approximately the rating bias. In this case, the masks are not secure enough anymore since the bias is much easier to estimate. Secondly, privacy might be protected much better when the model is trained to make random inferences. However, model efficacy is also severely damaged. When indicator $J$ reduces to relatively small numbers, privacy is thought to be protected better.  when privacy indicator $J$ continues decreasing. 

\begin{theorem}
\label{the:1}
Personalized mask is $(\epsilon, \delta)-private$ for user $u$ if there exists a function $n_{\mathcal{F}_u}: (0,1) \times (0,1) \rightarrow \mathbb{N}$. For any $\epsilon, \delta \in (0, 1)$ and any distribution $P_{\mathcal{IR}}$, if $n > n_{\mathcal{F}}$, then 
\begin{equation}
\begin{aligned}
     Pr_{\mathcal{Z}^n \sim P_{\mathcal{IR}}} (J(f_u^{mask}, P_{\mathcal{IR}}) \leq \mathop{\min}_{f_u \in \mathcal{F}_u} & J(f_u, P_{\mathcal{IR}}) + \epsilon ) \\
     & \geq 1-\delta.
\end{aligned}
\end{equation}
\end{theorem}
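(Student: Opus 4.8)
The plan is to recognize Theorem~\ref{the:1} as a statement of agnostic PAC learnability for the private mask model, in which the privacy indicator $J$ plays the role of the population risk. First I would observe that the training objective $L_u$ is exactly the empirical counterpart of $J$: writing $\hat{J}_n(f) = \frac{1}{n} \sum_{i=1}^{n} (r_{ui} - f(i))^2$ for the sample $\mathcal{Z}^n$, the learned mask is the empirical risk minimizer over the local class, $f_u^{mask} = \arg\min_{f \in \mathcal{F}_u} \hat{J}_n(f)$. The target inequality then says precisely that the excess risk $J(f_u^{mask}) - \min_{f \in \mathcal{F}_u} J(f)$ is at most $\epsilon$ with probability at least $1-\delta$.

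The core of the argument is the standard ERM decomposition. Letting $f^\star = \arg\min_{f \in \mathcal{F}_u} J(f)$ and abbreviating $\hat{f} := f_u^{mask}$, I would write
\begin{equation}
J(\hat{f}) - J(f^\star) = \bigl[J(\hat{f}) - \hat{J}_n(\hat{f})\bigr] + \bigl[\hat{J}_n(\hat{f}) - \hat{J}_n(f^\star)\bigr] + \bigl[\hat{J}_n(f^\star) - J(f^\star)\bigr].
\end{equation}
The middle bracket is nonpositive by the definition of the empirical minimizer, so the excess risk is bounded by $2\sup_{f \in \mathcal{F}_u} |J(f) - \hat{J}_n(f)|$. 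It therefore suffices to establish a uniform convergence bound: for $n \geq n_{\mathcal{F}_u}(\epsilon, \delta)$ one has $\sup_{f \in \mathcal{F}_u} |J(f) - \hat{J}_n(f)| \leq \epsilon/2$ with probability at least $1-\delta$, which immediately gives the claimed $(\epsilon,\delta)$ guarantee.

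To prove uniform convergence I would exploit the boundedness assumption $\mathcal{R} \in [0,1]$ (together with the implicit boundedness, or clipping, of the predictions $f(i)$), which confines the squared-loss functional $g_f(i,r) = (r - f(i))^2$ to a bounded interval. Standard symmetrization followed by a Rademacher-complexity (equivalently, covering-number) bound then controls the one-sided deviation by a quantity of order the capacity of the loss class divided by $\sqrt{n}$; inverting this relation for the prescribed $\epsilon$ and confidence $\delta$ produces the sample-complexity function $n_{\mathcal{F}_u}$ whose existence the theorem posits. The Lipschitz property of the squaring map on the bounded range lets me bound the covering numbers of the loss class by those of $\mathcal{F}_u$ itself, so the whole estimate is expressed through the capacity of the hypothesis class alone.

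The main obstacle is that $f_u^{mask}$ may be drawn from an \emph{arbitrary} local model class $\mathcal{F}_u$, so uniform convergence is not automatic; it genuinely requires $\mathcal{F}_u$ to have finite capacity. Because the predictions are real-valued, the relevant notion is not the VC dimension but a scale-sensitive one such as the pseudo-dimension or fat-shattering dimension, and the technical heart of the proof is converting a finite fat-shattering dimension into an explicit $n_{\mathcal{F}_u}(\epsilon,\delta)$ via chaining/covering-number arguments. In effect, the hypothesis that the sample-complexity function $n_{\mathcal{F}_u}$ exists is exactly the assumption that this scale-sensitive capacity is finite, and making that dependence quantitative is where the real work lies.
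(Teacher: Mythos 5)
Your proposal is correct and shares its skeleton with the paper's proof: both identify $f_u^{mask}$ as the empirical risk minimizer of the sample version of $J$, both use the standard three-term ERM decomposition (deviation at $\hat f$, nonpositive middle term, deviation at $f^\star$) to reduce the claim to a uniform convergence statement of the form $\sup_{f\in\mathcal{F}_u}\lvert J(f,P_{\mathcal{IR}})-J(f,P_{\mathcal{IR}}^n)\rvert\le\epsilon/2$, and both obtain the $2\epsilon$-then-halve bookkeeping. Where you diverge is in how uniform convergence is established. The paper assumes $\mathcal{F}_u$ is \emph{finite}, applies Hoeffding's inequality to each $f_u$ (using the boundedness $\lVert\mathcal{R}_j-f_u(\mathcal{I}_j)\rVert^2\in[0,1]$), and takes a union bound, yielding the explicit sample complexity $n_{\mathcal{F}_u}(\epsilon,\delta)=O(\log(2\lvert\mathcal{F}_u\rvert/\delta)/\epsilon^2)$. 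You instead invoke symmetrization and a Rademacher-complexity or fat-shattering/covering-number argument so as to handle arbitrary bounded real-valued classes. Your route buys generality --- it covers exactly the infinite-class extension the paper gestures at when it says the finiteness assumption ``is not a necessary condition'' --- at the cost of heavier machinery and a less explicit constant; the paper's route is elementary and gives a closed-form $n_{\mathcal{F}_u}$, but only under $\lvert\mathcal{F}_u\rvert<\infty$. One small point in your favor: you correctly flag that boundedness (or clipping) of the predictions $f(i)$ is needed for the loss class to be bounded, an assumption the paper imposes without comment; and in the finite-class setting your Rademacher bound degrades gracefully to essentially the same $\sqrt{\log\lvert\mathcal{F}_u\rvert/n}$ rate, so the two arguments are quantitatively consistent.
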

\begin{proof}
For any $f_u \in \mathcal{F}_u$, the privacy indicator of user $u$ calculated on the training sample $\mathcal{Z}^n$ is:

\begin{equation}
\label{eq:privacy_indicator}
    J(f_u, P_{\mathcal{IR}}^n) = \frac{1}{n} \sum_{j=1}^n \| \mathcal{R}_j - f_u(\mathcal{I}_j) \|^2.
\end{equation}
Each $\| \mathcal{R}_j - f_u(\mathcal{I}_j) \|^2$ is an independent random variable with mean $J(f_u, P_{\mathcal{IR}})$. We further assume that $\| \mathcal{R}_j - f_u(\mathcal{I}_j) \|^2 \in [0, 1]$. According to Hoeffding’s inequality~\footnote{https://en.wikipedia.org/wiki/Hoeffding's\_inequality}, we obtain:

\begin{equation}
    Pr_{\mathcal{Z}^n \sim P_{\mathcal{IR}}} (\lvert (f_u, P_{\mathcal{IR}}^n) - J(f_u, P_{\mathcal{IR}}) \rvert \geq \epsilon) \leq 2e^{-2n \epsilon^2}, 
\end{equation}
then we could get:

\begin{equation}
\begin{aligned}
     Pr_{\mathcal{Z}^n \sim P_{\mathcal{IR}}} ( \exists f_u \in \mathcal{F}_u, s.t. \lvert (f_u, & P_{\mathcal{IR}}^n) - J(f_u, P_{\mathcal{IR}}) \rvert \\
     & \geq \epsilon) \leq 2\lvert \mathcal{F}_u \rvert e^{-2n \epsilon^2}.
\end{aligned}
\end{equation}
This shows that if 

\begin{equation}
    n \geq \frac{\log (2 \lvert \mathcal{F}_u \rvert / \delta)}{2 \epsilon^2},
\end{equation}
then

\begin{equation}
\begin{aligned}
     Pr_{\mathcal{Z}^n \sim P_{\mathcal{IR}}} ( \lvert (f_u, P_{\mathcal{IR}}^n) - J(f_u, & P_{\mathcal{IR}}) \rvert \leq \epsilon, \\
     & \forall f_u \in \mathcal{F}_u) \geq 1 - \delta,
\end{aligned}
\end{equation}
which is equivalent to:

\begin{equation}
\begin{aligned}
     Pr_{\mathcal{Z}^n \sim P_{\mathcal{IR}}} ( J(f_u^{mask}, P_{\mathcal{IR}}) \leq \mathop{\min}_{f_u \in \mathcal{F}} J(f_u, & P_{\mathcal{IR}}) + 2 \epsilon) \\
     & \geq 1 - \delta.
\end{aligned}
\end{equation}
The reason is that, given 

\begin{equation}
    \forall f_u \in \mathcal{F}_u, \lvert (f_u, P_{\mathcal{IR}}^n) - J(f_u, P_{\mathcal{IR}}) \rvert \leq \epsilon,
\end{equation}
we could obtain step by step:

\begin{equation}
\begin{aligned}
    J(f_u^{mask}, P_{\mathcal{IR}}) & \leq J(f_u^{mask}, P_{\mathcal{IR}}^n) + \epsilon \\
    & \leq \mathop{\min}_{f_u \in \mathcal{F}} J(f_u, P_{\mathcal{IR}}^n) + \epsilon \\
    & \leq \mathop{\min}_{f_u \in \mathcal{F}} J(f_u, P_{\mathcal{IR}}) + \epsilon + \epsilon \\
    & = \mathop{\min}_{f_u \in \mathcal{F}} J(f_u, P_{\mathcal{IR}}) + 2 \epsilon.
\end{aligned}
\end{equation}
Let $\epsilon = \frac{\epsilon}{2}$, we finally get

\begin{equation}
    n_{\mathcal{F}_u}(\epsilon, \delta) \leq \frac{2 \log (2 \lvert \mathcal{F}_u \rvert / \delta)}{2 \epsilon^2}.
\end{equation}
\end{proof}

The function $n_{\mathcal{F}_u}$ determines the sample complexity of user $u$ for training a FedMMF algorithm. It stands for how many samples at least are required by personalized masks to guarantee the privacy of user $u$. Besides, we assume the hypothesis class $\mathcal{F}_u$ of local private model is finite. However, it is not a necessary condition, and Theorem~\ref{the:1} can be further generalized. From Theorem~\ref{the:1}, we know that the privacy-preserving ability of personalized mask decides on the quality of local training data. The users with good enough local data could generate secure enough personalized masks, which successfully limit the exposed information from the masked ratings. The privacy indicator $J$ can be used to judge if the personalized masks are secure enough. In addition, we should also try to find the most suitable hypothesis class $\mathcal{F}_u$ on various data sets.

%FedMMF is nearly able to achieve the best privacy-preserving performance in . Therefore, 

\begin{algorithm}[tb]
\caption{Federated Masked Matrix Factorization}
\label{alg:fedmmf}
\begin{algorithmic}[1]
    \STATE \textbf{Input}: $\bm{r}_{u \in \{ 1,...,n\}}, th_J$
    \STATE \textbf{Output}: $\bm{q}_{i \in \{ 1,...,m \}}, \bm{p}_{u \in \{ 1,...,n\}}, f_{u \in \{ 1,...,n\}}^{mask}$
    \STATE Server initializes $\bm{q}_{i \in \{ 1,...,m\}}^0$, each party $u$ initializes $\bm{p}_{u \in \{ 1,...,n\}}^0$ and $f_{u \in \{ 1,...,n\}}^{mask}(\theta_u)$.
    
    \item[]
    \FOR{each party $u \in \{ 1,...,n\}$ in parallel}
    \STATE // run on each party $u$
    \STATE Train private model $f_u ^{mask}(\theta_u)$ on local data $\bm{r}_u$;
    \STATE Compute personalized masked rating $r_{u, i}^{masked}$ according to Eq.~\ref{eq:masked_rating} for each $i \in \mathcal{K}_u$;
    \STATE Grouped to $\mathcal{U}_{secure}$ or $\mathcal{U}_{insecure}$ with $th_J$;
    \ENDFOR
    
    \item[]
    \STATE // run on the server
    \FOR{each $t = 1, 2, ..., T$}
    \FOR{each party $u \in \mathcal{U}_{secure}$ in parallel}
    \STATE Get gradients $\bm{\eta}_{ui \in \mathcal{K}_u}^t =$ {\bf MaskedUpdate}($\bm{q}_{i \in \mathcal{K}_u}^{t-1}$);
    \ENDFOR
    \FOR{each party $u \in \mathcal{U}_{insecure}$ in parallel}
    \STATE Get gradients $\bm{\tilde{\eta}}_{ui \in \mathcal{K}_u}^t =$ {\bf MaskedUpdate}($\bm{q}_{i \in \mathcal{K}_u}^{t-1}$);
    \ENDFOR

    \STATE Get the aggregated gradients $\sum_{u \in \mathcal{U}} \bm{\eta}_{i \in \mathcal{I}}^t$ according to Eq.~\ref{eq:adaptive_agg};
    \STATE Update item factors $\bm{q}_i^t = \bm{q}_i^{t-1} - \gamma \cdot \sum_{u \in \mathcal{U}} \bm{\eta}_i^t$ for each $i \in \mathcal{I}$;
    \ENDFOR
    
    \item[]
    \STATE // run on each party $u$
    \STATE {\bf MaksedUpdate:}
    \STATE Compute $e_{ui}^t = r_{ui}^{masked} - {\bm{q}^t_i}^T \bm{p}_u^t$ for each $i \in \mathcal{K}_u$;
    \STATE Update user factors $\bm{p}_u^t$ according to Eq.~\ref{eq:diff_p};
    \STATE Compute gradient $\bm{\eta}_{ui}^t$ according to Eq.~\ref{eq:grad_t} for each $i \in \mathcal{K}_u$;
    \STATE \textbf{Return} $\bm{\eta}_{ui \in \mathcal{K}_u}^{t}$ or $\bm{\tilde{\eta}}_{ui \in \mathcal{K}_u}^{t}$ to the server with adaptive secure aggregation protocol.
    
\end{algorithmic}
\end{algorithm}

\subsection{Adaptive Secure Aggregation}

The data quality of different users varies in the real world. Therefore, not all users can generate perfect personalized masks for protection. We propose an adaptive secure aggregation protocol to address this problem. For a given privacy indicator threshold $th_J$, we could divide the users into two groups,~\ie, secure masked group $\mathcal{U}_{secure}$ and insecure masked group $\mathcal{U}_{insecure}$. The privacy indicator $J$ of user in $\mathcal{U}_{secure}$ is larger than $th_J$, while the privacy indicator $J$ of user in $\mathcal{U}_{insecure}$ is smaller than $th_J$.

For user $u \in \mathcal{U}_{secure}$ with secure enough personalized masks, the gradients $\bm{\eta}_{ui}$ could be directly shared with the central server for aggregation. And the server could get $\sum_{u \in \mathcal{U}_{secure}} \bm{\eta}_{ui}$. However, for user $u \in \mathcal{U}_{insecure}$ with insecure personalized masks, sharing plain-text gradients will disclose the privacy of local rating data. Therefore, we adopt a mask-based secure aggregation method designed by ~\cite{bonawitz2017practical}. For an arbitrary pair of users $u, v \in \mathcal{U}_{insecure}$ and $u < v$, they decide a random mask $\bm{s}_{u, v} \in \mathbb{R}^{k \times 1}$ together. User $u$ adds this random mask $\bm{s}_{u, v}$ to its gradients, while user $v$ substracts $\bm{s}_{u, v}$ from its gradients. Then, each user $u$ could calculate:

\begin{equation}
	\bm{\tilde{\eta}}_{ui} = \bm{\eta}_{ui} + \sum_{u < v} \bm{s}_{u, v} - \sum_{u > v} \bm{s}_{v, u} \mod l,
\end{equation}
where $l$ is a large prime number. Next, each user $u \in \mathcal{U}_{insecure}$ sends the computed $\bm{\tilde{\eta}}_{ui}$ to the server. The server will calculate:

\begin{equation}
\begin{aligned}
	\sum_{u \in \mathcal{U}_{insecure}} \bm{\tilde{\eta}}_{ui} & = \sum_{u \in \mathcal{U}_{insecure}} (\bm{\eta}_{ui} + \sum_{u < v} \bm{s}_{u, v} - \sum_{u > v} \bm{s}_{v, u}) \\
	& = \sum_{u \in \mathcal{U}_{insecure}} \bm{\eta}_{ui} \mod l.
\end{aligned}
\end{equation}
The aggregated gradients could be obtained, and the gradients of one specific user are protected by the designed random masks. Furthermore, secret sharing~\cite{shamir1979share} is utilized to solve the dynamic user problem. With the adaptive secure aggregation protocol, the server could obtain 

\begin{equation}
\label{eq:adaptive_agg}
	\sum_{u \in \mathcal{U}} \bm{\eta}_{ui} = \sum_{u \in \mathcal{U}_{secure}} \bm{\eta}_{ui} + \sum_{u \in \mathcal{U}_{insecure}} \bm{\tilde{\eta}}_{ui}
\end{equation}
The details of FedMMF are shown in Algorithm~\ref{alg:fedmmf}. Compared with only applying the original aggregation in~\cite{bonawitz2017practical}, FedMMF utilizes the adaptive secure aggregation to further improve efficiency.

\section{Experiments}
\label{sec:experiment}
In this section, we show that FedMMF could improve efficiency without the loss of privacy and model effectiveness. Firstly, we explain the data sets, baseline models, and other settings in the experiments. Then, we show the improvements of FedMMF on model efficiency. With the help of adaptive secure aggregation protocol based on personalized masks, FedMMF accelerates the training process. At last, we discuss the model effectiveness of FedMMF with different kinds of personalized masks, compared to the baseline model.

\subsection{Settings}

%\begin{table*}
%\centering
%\begin{tabular}{rccc}
%\toprule
%Dataset & MovieLens 100K & MovieLens 10M & LastFM \\
%\midrule
%No. of users & 943 & 71,567 & 1,892 \\
%No. of items & 1,682 & 10,681 & 17,632 \\
%No. of ratings & 100,000 & 10,000,054 & 92,834 \\
%User profiles & Age, gender, occupation, zip & Tags & Tags \\
%Item attributes & Title, date, genres & Title, genres, Tags & Tags \\
%\bottomrule
%\end{tabular}
%\caption{Description of three real-world data sets. Among them, MovieLens 10M contains the most number of users and items. User profiles and item attributes are both prepossessed and combined to be the features of the private context model.}
%\label{tab:datasets}
%\end{table*}

We verify FedMMF on three real-world data sets. Two of them are MovieLens data sets~\cite{harper2015movielens}, \textit{i.e.}, MovieLens 100K and MovieLens 10M. The other one is the LastFM data set~\cite{Cantador:RecSys2011}. In our experiment, each user is regarded as a participant in the collaborative training process. Therefore, the user's own ratings are kept on the local party. Besides, we utilize the side information (\textit{i.e.}, user profiles and item attributes) to train the local private model. To construct features from tags in the data set, we utilize TFIDF~\cite{robertson2004understanding} and PCA~\cite{abdi2010principal} techniques. Besides, we set bins for the listening counts of music of the LastFM data set and convert them into ratings scaling from 1 to 5. In addition, the evaluation metrics of model efficacy are root mean square error (RMSE) and mean absolute error (MAE). They are averaged by each user-item pair but not each user, which is an alignment with most current works. Besides, we run each experiment ten times to obtain the mean and standard deviation values.

%The reason is that local collaborative filtering algorithms (\textit{e.g.}, local MF) are meaningless and cannot be used to construct personalized masks. Although the involvement of context data could result in unfair comparison more or less, 

%They are movie and music recommendation data sets, respectively. The detailed descriptions of data sets are shown in Tab.~\ref{tab:datasets}. 

% It is the disadvantage of the personalized mask's application in the recommendation scenario. 
%we focus on the illustration of the personalized mask in this relatively simple recommendation scenario in this paper and will generalize to more complicated ones.

The compared models are: 1) {\bf FedMF}: parties collaboratively train matrix factorization models via sharing the latent factors of common users, where neither HE nor DP is utilized; 2) {\bf One-order FedMMF}: each party locally learns linear personalized masks to hide private rating information via a linear regression model~\cite{montgomery2012introduction}. Then, all parties collaboratively train FedMF on the one-order masked ratings; 3) {\bf Two-order FedMMF}: similarly, each party constructs two-order masks to protect private ratings via locally learning a factorization machine model~\cite{rendle2010factorization}; 4) {\bf High-order FedMMF}: each party captures high-order and nonlinear feature interactions through a neural network model~\cite{yegnanarayana2009artificial}. We do not compare FedMMF with DP-based FedMF, because DP causes effectiveness loss while FedMMF does not. Besides, we also show the performance of various local context models and federated context models for reference.

\begin{figure}[t]
     \centering
     \begin{subfigure}[b]{0.48\textwidth}
         \centering
    	\includegraphics[width=\textwidth]{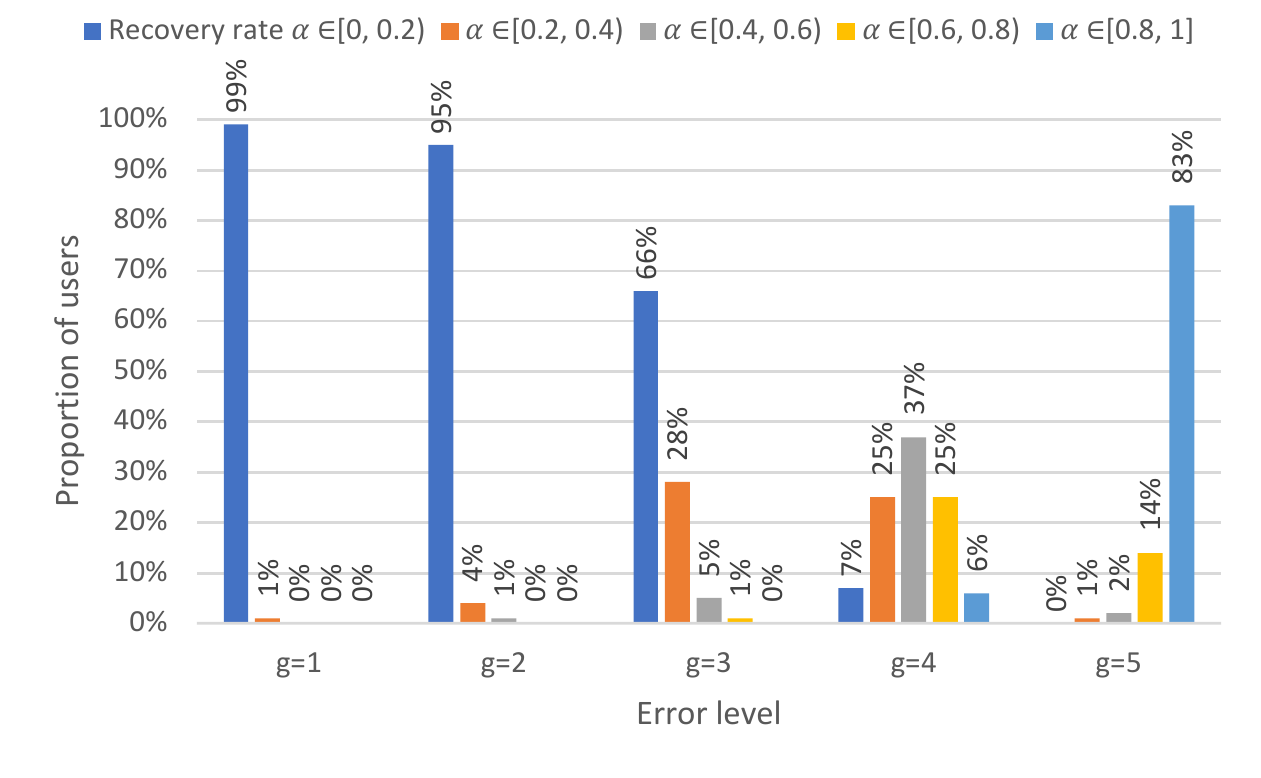}
         \caption{The results of recovery attack under different error levels. On the one hand, when error level $g$ is small, the recovery attack could hardly reveal the original rating information. On the other hand, when error level $g$ grows, the recovery attack becomes more accurate. However, the utility of recovered ratings also decreases.}
         \label{fig:recovery_attack}
     \end{subfigure}
     \hfill
     \begin{subfigure}[b]{0.48\textwidth}
         \centering
         \includegraphics[width=\textwidth]{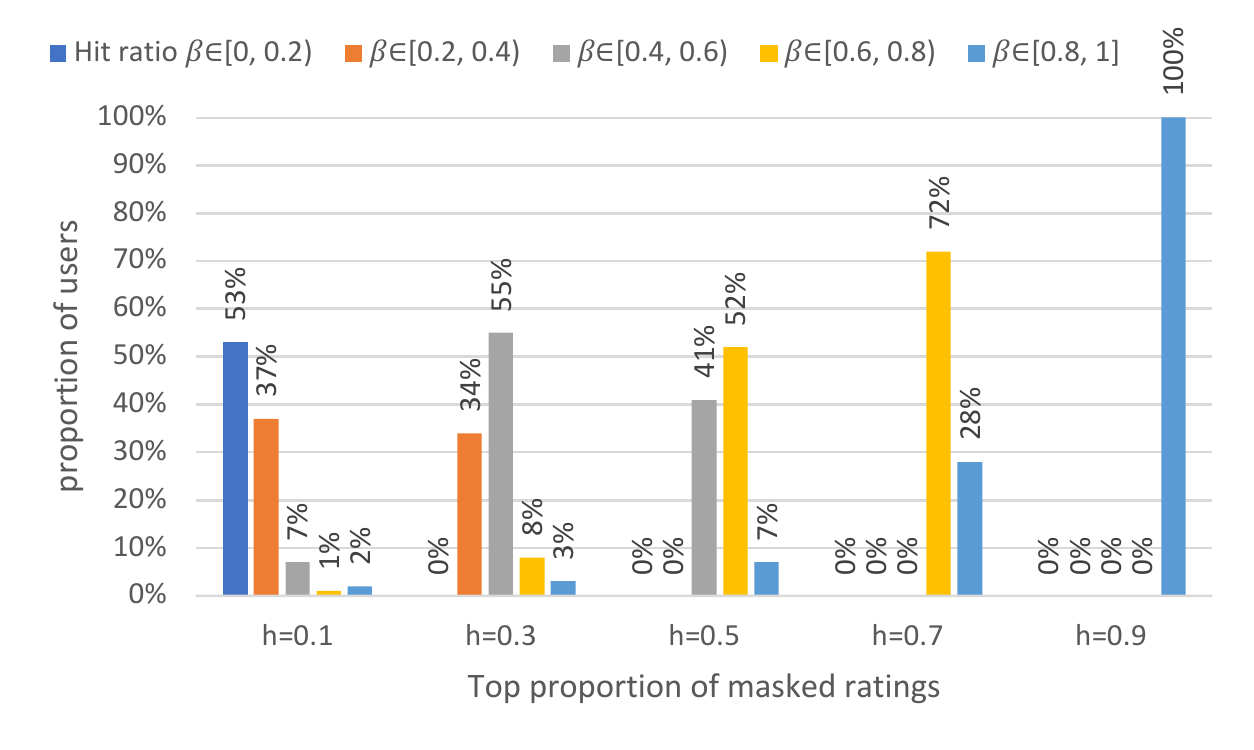}
         \caption{The results of ranking attack under different top proportions. With the masked ratings of each party, the adversary wants to choose the actual high-rated items. When the adversary utilizes a small top proportion $h$, the attacks performs on most parties achieve a poor hit ratio, which is less than 0.5. Although the hit ratio grows as $h$ increases, a large $h$ results in a useless ranking attack.}
         \label{fig:ranking_attack}
     \end{subfigure}
    \caption{Experiments results on recovery and ranking attacks.}
    \label{fig:attack_experiments}
\end{figure}

\subsection{Efficiency Promotion and Privacy Discussion}

Compared with HE-based FedMF~\cite{chai2020secure}, FedMMF with all users in the insecure user group could largely speed up the training process~\cite{bonawitz2017practical}. Then, the personalized mask technique could further improve the efficiency of the secure aggregation process via sharing plain-text gradients of parties with well-protected ratings. We provide two attack methods,~\ie, recovery attack and ranking attack, for analyzing how much the personalized mask technique could further promote model efficiency. Taking two-order FedMMF on MovieLens 10M data set as an example, we conduct the attack experiments. The rating range of the MovieLens 10M data set is from 0.5 to 5.0. And the rating interval is 0.5.
%The first one tries to recover the original ratings from the masked ratings, while the second one aims to rank the masked ratings and separate high ratings from low ratings. 

%At the user side, the computation complexity of our designed secure aggregation method is $O(x^2 + kxy)$, the communication complexity is $O(x + ky)$, and the storage complexity is $O(x + ky)$. At the server side, the computation complexity is $O(kxy^2)$, the communication complexity is $O(x^2 + kxy)$, and the storage complexity is $O(x^2 + ky)$~\cite{bonawitz2017practical}.
% $n$, $m$ and $k$ are the security parameters of secret sharing.

\begin{table*}[t]
\centering
\resizebox{\linewidth}{!}{
\begin{tabular}{lcccccc}
\toprule
\multirow{2}*{Models} & \multicolumn{2}{c}{MovienLens 100K} & \multicolumn{2}{c}{MovienLens 10M} & \multicolumn{2}{c}{LastFM} \\
~ & RMSE & MAE & RMSE & MAE & RMSE & MAE \\
\midrule
FedMF       & 0.9491 $\pm$ 0.0040 & 0.7412 $\pm$ 0.0027 & 0.7753 $\pm$ 0.0034 & 0.5827 $\pm$ 0.0015 & 1.2235 $\pm$ 0.0068 & 0.8780 $\pm$ 0.0047 \\
\midrule
LocalLR     & 1.0107 $\pm$ 0.0025 & 0.8040 $\pm$ 0.0022 & 0.8818 $\pm$ 0.0023 & 0.6766 $\pm$ 0.0011 & 1.1081 $\pm$ 0.0099 & 0.8163 $\pm$ 0.0085 \\

FedLR       & 1.0796 $\pm$ 0.0081 & 0.8844 $\pm$ 0.0058 & 0.9703 $\pm$ 0.0020 & 0.7497 $\pm$ 0.0012 & 1.5448 $\pm$ 0.0110 & 1.3538 $\pm$ 0.0137 \\

One-order FedMMF  & 0.9340 $\pm$ 0.0043 & 0.7340 $\pm$ 0.0035 & 0.7695 $\pm$ 0.0013 & 0.5808 $\pm$ 0.0008 & 1.0886 $\pm$ 0.0109 & 0.8066 $\pm$ 0.0092 \\
\midrule
LocalFM     & 1.0083 $\pm$ 0.0019 & 0.8054 $\pm$ 0.0019 & 0.8938 $\pm$ 0.0023 & 0.6862 $\pm$ 0.0012 & 1.0845 $\pm$ 0.0130 & 0.7988 $\pm$ 0.0035 \\

FedFM       & 1.0628 $\pm$ 0.0070 & 0.8644 $\pm$ 0.0053 & 0.9639 $\pm$ 0.0022 & 0.7445 $\pm$ 0.0015 & 1.5301 $\pm$ 0.0133 & 1.3369 $\pm$ 0.0117 \\

Two-order FedMMF  & \underline{0.9218 $\pm$ 0.0037} & \underline{0.7250 $\pm$ 0.0030} & 0.7720 $\pm$ 0.0013 & 0.5827 $\pm$ 0.0007 & \underline{1.0842 $\pm$ 0.0090} & 0.7964 $\pm$ 0.0031 \\
\midrule
LocalNN     & 1.0114 $\pm$ 0.0021 & 0.8087 $\pm$ 0.0017 & 0.8819 $\pm$ 0.0020 & 0.6816 $\pm$ 0.0009 & 1.1007 $\pm$ 0.0068 & 0.8007 $\pm$ 0.0123 \\

FedNN       & 1.0945 $\pm$ 0.0074 & 0.9176 $\pm$ 0.0060 & 0.9756 $\pm$ 0.0024 & 0.7689 $\pm$ 0.0028 & 1.5461 $\pm$ 0.0060 & 1.3598 $\pm$ 0.0075 \\

High-order FedMMF  & 0.9319 $\pm$ 0.0025 & 0.7317 $\pm$ 0.0018 & \underline{0.7648 $\pm$ 0.0016} & \underline{0.5772 $\pm$ 0.0008} & 1.0860 $\pm$ 0.0055 & \underline{0.7933 $\pm$ 0.0070} \\
\bottomrule
\end{tabular}}
\caption{Performance of FedMMF compared with baseline models on different data sets. FedMMF models with different personalized masks have no effectiveness loss compared with FedMF in all data sets. Besides the comparison between FedMMF and FedMF, we also show that FedMMF outperforms local context models and federated context models.}
\label{tab:performance}
\end{table*}

\subsubsection{Recovery Attack}

Against the masked ratings, an adversary could conduct an intuitive attack to recover the original ratings. However, the attack could be difficult if only the masked ratings are exposed. Therefore, for each party, we assume that the adversary knows the minimum and maximum values of the original ratings. Then, the adversary could scale the masked ratings to the range of original ratings for recovery. We define $g$ as the error level. If the difference between one recovered value and the corresponding original rating is less than $\alpha$, the recovery is considered successful. Thus, there exists a recovery rate $\alpha$ for each party's masked rating. In Fig.~\ref{fig:recovery_attack}, we show the proportion of parties whose recovery rate is in a certain range under different error levels. As we can see, when the error level is small, \textit{e.g.}, $g = 1$ and $g = 2$, the adversary could nearly reveal no party's privacy with a recovery rate larger than 0.5. And as the error level increases, the recovery rate begins to grow. However, a higher error level means a more inaccurate recovery, and the utility of the recovered ratings is poorer.

\subsubsection{Ranking Attack}
Since the intuitive recovery attack seems not successful enough, we introduce another method named ranking attack. Instead of recovering the original concrete ratings, ranking attack tries to find the high-rating items from their masked ratings. First, for each party, the adversary ranks the rated items according to their masked ratings. Then, items in the top $h$ proportion of masked ratings are selected as the high-rating items. Similarly, given $h$, we also sort these items with regard to their original ratings as the true high-rating item set. Thus, we could evaluate the ranking attack with hit ratio $\beta$, which is calculated as the ratio that items selected using masked ratings are in the true high-rating item set. Fig.~\ref{fig:ranking_attack} shows that, under different top proportion $h$, the ranking attack could reveal the rating ranking privacy of parties. If the selected top proportion is small, \textit{e.g.}, $h = 1$ and $h = 2$, the attacks performed on most parties' masked ratings obtain a hit ratio less than 0.5. It means that more than half of the selected items do not have high ratings. When the adversary tunes $h$ larger, the attack becomes more effective. However, a large $h$ is relatively meaningless because the adversary does not want to choose all items to be high-rating in reality.

According to the experiment results of the above two attack methods, we find that a considerable number of users get their rating privacy well-protected with the help of personalized masks. These users can be put in the secure group and transfer their gradients in plain text. Therefore, the personalized mask could further accelerate the training process of federated recommendation. Besides federated learning, the personalized masked ratings of users in the secure group could also be centrally collected and used for training without privacy leakage. This operation is able to reduce the communication and computation costs once again.

% we could see the number of parties with well-protected data. Based on the average ratio of well-protected parties, we carry out a simple experiment shown in Table~\ref{}. We could see that FedMMF achieve a much better efficiency perfoemance.

% \begin{table}
% \centering
% \begin{tabular}{rccccc}
% \toprule
% Correlation coefficient & $ \leq 0.4 $ & $ \leq 0.5 $ & $ \leq 0.6 $ & $ \leq 0.7 $ & $\leq 0.8 $ \\
% \midrule
% FedMMF (cross mask) & 0.30\% & 0.81\% & 2.58\% & 7.33\% & 20.65\% \\
% \bottomrule
% \end{tabular}
% \caption{Caption}
% \label{tab:percentage}
% \end{table}

\subsection{Discussion on Model Effectiveness}

In this section, we verify the effectiveness of FedMMF on three real-world data sets. We implement three private models to construct personalized masks with different properties: the one-order mask, two-order mask, and high-order mask. The performances of FedMMF with these three masks are shown in Tab.~\ref{tab:performance}. RMSE and MAE are both regression evaluation metrics. Smaller value stands for better model efficacy. As we can see, FedMMF models with different personalized masks have no effectiveness loss compared with FedMF in all data sets. 

Moreover, FedMMF even outperforms FedMF. The effectiveness improvements could be divided into two parts. The first part benefits from the ensemble training scheme of FedMMF. The incorporation of personalized masks utilizes the idea of ensemble learning to combine weak learners for a better generalization ability. The second part takes advantage of the side information utilized in the private model of FedMMF. In the recommendation scenario, feature interactions are important information to capture. Another observation is that, on all three data sets, two-order FedMMF and high-order FedMMF dominate alternatively. It means we should utilize cross features to construct personalized masks in the recommendation scenarios. We also compare FedMMF models with corresponding local context and federated context models, shown in Tab.~\ref{tab:performance}. Comparing FedMMF with different local context models and federated context models, we could see that FedMMF also outperforms both of them. This observation verifies the main contribution to the effectiveness improvement is the incorporation of ensemble learning. On the other hand of the shield, FedMMF can also be regarded as an excellent way to combine collaborative information and feature information.

\section{Conclusion}
In this paper, we provide a new idea of personalized masks to protect data privacy in federated learning, which neither slows the training process down nor damages model performance. Taking the recommendation scenario as an example, we apply it in the FedMMF algorithm. Combining with the adaptive secure aggregation protocol, FedMMF shows superiority theoretically and empirically. In our future work, we would like to extend personalized masks to more general federated learning tasks besides recommender systems and try to combine personalized masks with differential privacy theory.

\clearpage

%% The file named.bst is a bibliography style file for BibTeX 0.99c
\bibliographystyle{named}
\bibliography{fedmmf}

\end{document}